\documentclass[letterpaper, 10 pt, conference]{ieeeconf}  % Comment this line out
                                                          % if you need a4paper
%\documentclass[a4paper, 10pt, conference]{ieeeconf}      % Use this line for a4
                                                          % paper

\IEEEoverridecommandlockouts                              % This command is only
                                                          % needed if you want to
                                                          % use the \thanks command
\overrideIEEEmargins

\usepackage{amsmath}
\usepackage{amssymb} 
\usepackage{algpseudocode}
\usepackage{caption}
\usepackage{subcaption}
\usepackage{graphicx}
\usepackage{url}
\usepackage{color}

% Tikz pacakge
\usepackage{tikz}
\usepackage{tkz-graph}
\usetikzlibrary{arrows}
\tikzstyle{block} = [rectangle, draw, 
    text width=12em, text centered, rounded corners, minimum height=2em]
    
%\usepackage{algorithmicx}
% The following packages can be found on http:\\www.ctan.org
%\usepackage{graphics} % for pdf, bitmapped graphics files
%\usepackage{epsfig} % for postscript graphics files
%\usepackage{mathptmx} % assumes new font selection scheme installed
%\usepackage{times} % assumes new font selection scheme installed

% some traditional defintions that can be blamed on craig barratt
\newcommand{\BEAS}{\begin{eqnarray*}}
\newcommand{\EEAS}{\end{eqnarray*}}
\newcommand{\BEA}{\begin{eqnarray}}
\newcommand{\EEA}{\end{eqnarray}}
\newcommand{\BEQ}{\begin{equation}}
\newcommand{\EEQ}{\end{equation}}
\newcommand{\BIT}{\begin{itemize}}
\newcommand{\EIT}{\end{itemize}}
\newcommand{\BNUM}{\begin{enumerate}}
\newcommand{\ENUM}{\end{enumerate}}

% text abbrevs

\newcommand{\ie}{{\it i.e.}}

\newcommand{\ones}{\mathbf 1}

% std math stuff

  % symmetric matrices

% lin alg stuff

\newcommand{\diag}[1]{\mathop{\bf diag}\left({#1}\right)}

% convexity & optimization stuff

% defs for cones & generalized inequalities

% defs from SIYOU
%\def\thesection {\arabic{section}.}
%\def\thesubsection {(\thesection\alph{subsection})}
%
%  \newtheoremstyle{example}{\topsep}{\topsep}%
%   {}%         Body font
%    {}%         Indent amount (empty = no indent, \parindent = para indent)
%     {\bfseries}% Thm head font
%     {}%        Punctuation after thm head
%     {\newline}%     Space after thm head (\newline = linebreak)
%     {\thmname{#1}\thmnumber{ #2}\thmnote{ #3}}%         Thm head spec
%
%   \theoremstyle{example}
%   \newtheorem{example}{Example}

\newcommand{\trace}[1]{\mathop{\bf Tr}{\left(#1\right)}}
\newcommand{\card}[1]{\mathop{\bf Card}{\left(#1\right)}}
\newcommand{\re}[1]{\mathop{\bf Re}{\left(#1\right)}}
\newcommand{\hinf}{\mathcal{H}_{\infty}}
\newcommand{\range}[1]{\mathop{\bf Range}{\left(#1 \right)}}
\renewcommand{\ker}[1]{\mathop{\bf Ker}{\left(#1\right)}}

\usepackage[amsmath,thmmarks]{ntheorem}
\RequirePackage{amssymb}
\theoremstyle{plain}
\theoremheaderfont{\normalfont\bfseries}
\theorembodyfont{\itshape}
\theoremseparator{:}
%\theoremsymbol{}
%\newtheorem{theorem}{Theorem}[chapter]
%\newtheorem{proposition}{Proposition}[chapter]
%\newtheorem{lemma}{Lemma}[chapter]
%\newtheorem{corollary}{Corollary}[chapter]
%\newtheorem{example}{Example}[chapter]
%\newtheorem{remark}{Remark}[chapter]
%\newtheorem{definition}{Definition}[chapter]
%
%\theoremstyle{nonumberplain}
%\theoremheaderfont{\scshape}
%\theorembodyfont{\normalfont}
%\theoremsymbol{\ensuremath{_\blacksquare}}
%\newtheorem{proof}{Proof}
\qedsymbol{\ensuremath{_\blacksquare}}

\newtheorem{theorem}{Theorem}

\newtheorem{proposition}[theorem]{Proposition}
\newtheorem{example}{Example}

\newcommand{\vct}[1]{\mathbf{#1}}
\newcommand{\RHinf}{\mathcal{RH}_\infty}
\newcommand{\Hinf}{\mathcal{H}_\infty}
\newcommand{\M}{\mathcal{M}}
\newcommand{\knorm}[1]{\bar{\mu}_k\left({#1}\right)}
\newcommand{\mink}[1]{\underline{\mu}_k\left({#1}\right)}
\newcommand{\ksdpnorm}[1]{\bar{\mu}^{sdp}_k\left({#1}\right)}
\newcommand{\kroundnorm}[1]{\bar{\mu}^{round}_k\left({#1}\right)}

\title{\LARGE \bf
A Convex Approach to Sparse $\Hinf$ Analysis \& Synthesis
}

\author{Seungil You and Nikolai Matni% <-this % stops a space
\thanks{This research was in part supported by NSF NetSE, AFOSR, the Institute
for Collaborative Biotechnologies through grant W911NF-09-0001 from the
U.S. Army Research Office}% <-this % stops a space
\thanks{The authors are with the Control and Dynamical Systems, California Institute of Technology, Pasadena, CA 91125, USA
		{\tt\small \{syou,nmatni\}@caltech.edu}}
}

\begin{document}

\maketitle
\thispagestyle{empty}
\pagestyle{empty}

%%%%%%%%%%%%%%%%%%%%%%%%%%%%%%%%%%%%%%%%%%%%%%%%%%%%%%%%%%%%%%%%%%%%%%%%%%%%%%%%
\begin{abstract}
In this paper, we propose a new robust analysis tool motivated by large-scale systems.  
The $\Hinf$ norm of a system measures its robustness by quantifying the worst-case behavior of a system perturbed by a unit-energy disturbance.  However, the disturbance that induces such worst-case behavior requires perfect coordination among all disturbance channels.  Given that many systems of interest, such as the power grid, the internet and automated vehicle platoons, are large-scale and spatially distributed, such coordination may not be possible, and hence the $\Hinf$ norm, used as a measure of robustness, may be too conservative.  We therefore propose a cardinality constrained variant of the $\Hinf$ norm in which an adversarial disturbance can use only a limited number of channels.  As this problem is inherently combinatorial, we present a semidefinite programming (SDP) relaxation based on the $\ell_1$ norm that yields an upper bound on the cardinality constrained robustness problem.
We further propose a simple rounding heuristic based on the optimal solution of SDP relaxation which provides a lower bound.  Motivated by privacy in large-scale systems, we also extend these relaxations to computing the minimum gain of a system subject to a limited number of inputs.  Finally, we also present a SDP based optimal controller synthesis method for minimizing the SDP relaxation of our novel robustness measure.  The effectiveness of our semidefinite relaxation is demonstrated through numerical examples. 

\end{abstract}

%%%%%%%%%%%%%%%%%%%%%%%%%%%%%%%%%%%%%%%%%%%%%%%%%%%%%%%%%%%%%%%%%%%%%%%%%%%%%%%%
\section{Introduction}
Structure, and in particular sparsity, has proven to be a powerful tool in the analysis and design of large-scale control systems. Lyapunov analysis \cite{MP14_chordal}, distributed performance certification \cite{MLP14_admm}, distributed optimal controller synthesis \cite{RL06_QI} and controller architecture design \cite{matni2014regularization} all rely on and exploit structural properties of the underlying system to solve seemingly intractable problems in a computationally efficient manner.  In contrast, in the context of robust control, adding additional structure to system uncertainty has traditionally made analysis and synthesis more difficult.  For instance, linear matrix inequality (LMI) based necessary and sufficient conditions for the robust stability of a system subject to an unstructured delta block can be derived, but no such results exist if we restrict ourselves to highly structured delta blocks \cite{Dullerud:2010tc}. 

In this paper, we ask the following question, which we later interpret in terms of robustness to structured disturbances: given a large scale system with $p$ input channels, what $k\ll p$ input channels should be used to maximally (minimally) perturb the system using an unit energy input. We show that the solution can be obtained by suitably modifying the power semi-norm based definition of the $\Hinf$ norm of a system to incorporate a cardinality constraint on the input; we therefore call the resulting performance metric the $k$-sparse $\Hinf$ norm of the system.

We argue that questions pertaining to the maximal and minimal gains of a system restricted to a sparse subset of inputs arise naturally in the context of distributed system robustness analysis, consensus robustness analysis, privacy and system security.  We further show that the resulting optimization problems are in fact a generalization of the maximal and minimal sparse eigenvalue problems, objects of central importance in certifying the performance of compressed sensing matrices \cite{candes2008restricted} and in sparse PCA \cite{d2007direct}. We also show touch upon how these restricted gains relate to analogous conditions developed in the Regularization for Design (RFD) \cite{matni2014regularization} framework that guarantee the recovery of optimal controller architectures.

Of course, the resulting optimization problems are combinatorial in nature, and are easily seen to be computationally difficult in general. Leveraging a novel \emph{primal} formulation of the KYP lemma \cite{you2014h}, we propose a semidefinite relaxation (akin to that proposed in \cite{d2007direct}) for computing lower/upper bounds on the resulting minimal/maximal restricted gains of the system, and a simple rounding heuristic to obtain corresponding upper/lower bounds.  We further derive the dual of the resulting semidefinite program (SDP) and show that it has similar structure to the traditional KYP LMI test, allowing for standard semidefinite programming based controller synthesis methods to be applied.

The paper is organized as follows: in Section II, we formally introduce the $k$-sparse $\Hinf$ norm and the analogous $k$-sparse minimal gain of a system, and elucidate on several engineering applications.  We also make connections to compressed sensing, restricted isometry constants and sparse PCA, as well as RFD.  In Section III we present both a semidefinite relaxation and a rounding heuristic to obtain lower and upper bounds on the $k$-sparse $\Hinf$ norm of a system.  The dual to our semidefinite relaxation is derived in Section IV, and we show how it can be used to synthesize a centralized controller that minimizes the relaxed $k$-sparse $\Hinf$ norm of the system.  We present several numerical examples in Section V, and end with a summary and discussion of future directions in Section VI.

\subsection{Notation}
We use $\RHinf$ to denote the space of stable real-rational proper transfer matrices.
We use lower case Latin letters $x$ to denote vectors, bold lower case Latin letters $\vct x$ to denote signals, upper case Latin letters $X$ to denote matrices and upper case calligraphic letters $\mathcal{X}$ to denote elements of $\RHinf$. 

We recall the definition of the power semi-norm, $\|\mathbf{x}\|^2_P := \underset{N \rightarrow \infty}{\lim} \frac{1}{N}\sum_{k=0}^{N-1} x_k^*x_k$.
For a matrix $X$, we denote its conjugate transpose by $X^*$, transpose by $X^{\top}$, the projection of $X$ onto its diagonal elements by $\diag{X}$, and the range space and the null space of $X$ by $\range{X}$ and $\ker{X}$ respectively.
In addition, $|X|$ denotes the element-wise absolute value of $X$, and the one vector $\ones$ is a vector whose entires are all one.
The generalized inequality $X \succeq 0$ means that $X$ is positive semidefinite, and $X \succ 0$ means that $X$ is positive definite.
\section{$k$-sparse $\mathcal{H}_{\infty}$ analysis}
%These are presented without proof at the end of the section.
We consider a discrete time linear time invariant system\footnote{ Although we present our analysis for discrete-time systems, analogous arguments and results hold for continuous time systems.}
\begin{equation}
\mathcal{M}(z) = C(z I - A)^{-1}B + D \in \RHinf.
\label{eq:system}
\end{equation}

 Recall that the $\Hinf$ norm of $\M$ can be computed as the worst case gain in the output of the system induced by a disturbance of unit power semi-norm \cite{doyle1992feedback,you2013lagrangian,you2014h}:
\begin{equation}
\begin{aligned}
\|\mathcal{M}\|_{\infty}^2:=& \underset{\mathbf{w}, \mathbf{x}} {\text{maximize}}
& & \|C\mathbf{x}+D\mathbf{w}\|_P^2\\
&\text{subject to}
&& x_{k+1} = Ax_k + Bw_k\\
&&& x_0 = 0\\
&&& \|\mathbf{w}\|_P^2 \leq 1.
\end{aligned}
\label{eq:hinfinity}
\end{equation}
The $\Hinf$ norm measures the worst-case behavior of the system subject to power semi-norm bounded disturbances, and it has well known implications on the robust stability of the system with uncertain blocks \cite{Dullerud:2010tc}, as well as many practical interpretations \cite{doyle1992feedback}.

One such interpretation is that an attacker seeks to maximize their disruption of the system using the disturbance $\vct w$ -- in this case, the optimal disturbance $\vct w^\star$ to optimization problem \eqref{eq:hinfinity} is precisely a disturbance that maximizes the attacker's impact on the system.  Taking an opposite perspective, from the viewpoint of a system designer, the maximizing disturbance denotes a weak point of the system that may need to be addressed.

A seemingly innocuous assumption in the above analysis is that the attacker can simultaneously coordinate all of the disturbance channels: although reasonable in a centralized setting, this assumption may prove to be quite conservative when $\M$ is a distributed system.
In particular, if there are many possible disturbances ($B$ has many columns), and these disturbances enter through channels that are physically separated, it may be overly conservative to consider the response of the system to a centralized attack. 
In order to alleviate this conservativeness, we propose a \emph{cardinality constrained} variation of optimization problem \eqref{eq:hinfinity}, in which we assume that at most $k$ disturbance channels can have non-zero power semi-norms\footnote{We define the cardinality of a power signal, $\card{\mathbf{w}}$ as the number of indices $i$ such that  $\|\mathbf{w}_i\|_P > 0.$
Notice that because we use the power semi-norm in this definition, all signals with finite $\ell_2$ norm have a cardinality of $0$, as their power semi-norm is $0$.}:
\begin{equation}
\begin{aligned}
\{\knorm{\M}\}^2 := ~
& \underset{\mathbf{w}, \mathbf{x}} {\text{maximize}}
& & \|C\mathbf{x}+D\mathbf{w}\|_P^2\\
&\text{subject to}
&& x_{k+1} = Ax_k + Bw_k\\
&&& x_0 = 0\\
&&& \|\mathbf{w}\|_P^2 \leq 1\\
&&& \card{\mathbf{w}} \leq k.
\end{aligned}
\label{eq:k-sparse-hinfinity}
\end{equation}
We refer to $\knorm{\M}$ as the $k$-sparse $\Hinf$ norm of system $\M$. 

It should be clear that $\knorm{\M} \leq \|\mathcal{M}\|_{\infty}$ for all $k$, 
but the size of the difference between those two quantities is unclear.
If the gap is small, then the additional effort needed to accommodate the cardinality constraint on the disturbance may not be justified.  
Before elaborating on other interpretations of the $k$-sparse $\Hinf$ norm of a system, we show that the gap between $\knorm{\cdot}$ and $\|\cdot\|_\infty$ can be made arbitrarily large for a fixed $k$ by letting the state dimension of the underlying system tend to infinity.

\begin{example}
Consider a system $\M$, as in \eqref{eq:system}, described by state-space parameters $(A,B,C,D)$, where $A =  0.99 \frac{1}{n} \mathbf{1}\mathbf{1}^{\top} + 0.1 (I_n -  \frac{1}{n} \mathbf{1}\mathbf{1}^{\top})\in \mathbb{R}^{n \times n}$, $B = I_n$, $C = I_n$, $D = 0_{n,n}$.
%Furthermore $A$ is symmetric, and $\lambda_1(A) = 0.99$, and the corresponding eigenvector $v_1 = \mathbf{1}$, and $\lambda(A) = 0.1$ for $i \geq 2$.
Due to the special structure of the state-space parameters, optimization problems \eqref{eq:hinfinity} and  \eqref{eq:k-sparse-hinfinity} can be solved analytically, and $\frac{\knorm{\M}}{\|\M\|_\infty} = O\left(\sqrt{\frac{k}{n}}\right)$.
Thus for a fixed $k$ the gap between $\knorm{\M}$ and $\|\M\|_\infty$ can be made arbitrarily large by letting $n\to\infty$.  
Figure \ref{fig:gap} shows  $\frac{\knorm{\M}}{\|\M\|_\infty}$ for $k=5$ and $n=5,\dots,30$.
\label{ex:motivating}
\end{example}

\begin{figure}[ht!]
\includegraphics[width=0.5\textwidth]{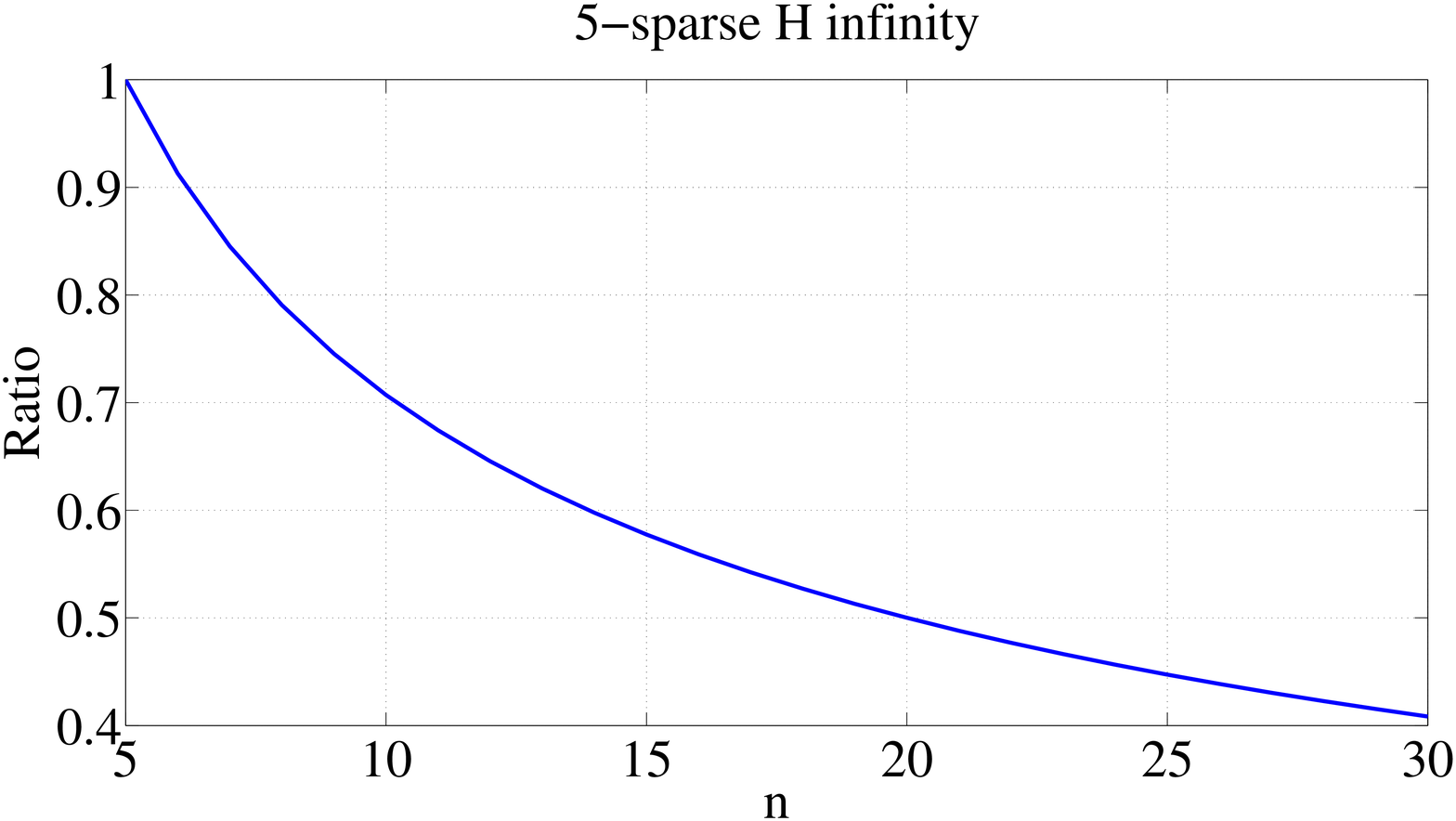}
\caption{The ratio, $\bar{\mu}_5(\M)/\|\M\|_{\infty}$ for $n=5, \cdots, 30$.}
\label{fig:gap}
\end{figure}

Example \ref{ex:motivating} shows the existence of systems for which standard $\Hinf$ analysis can be overly conservative if indeed only at most $k$ disturbances can be expected to coordinate their efforts to disrupt the system.  We now outline two concrete engineering applications in which such an analysis may arise.
%
%AlthoughIn a networked system point of view, this means that all disturbance channel coordinates with each other to maximize (or minimize) its impact on the system which may not be a realistic scenario.

\textbf{Robustness analysis for distributed system:} Quantifying the robustness of a distributed system, such as the power grid, allows the system designer to plan for and mitigate the worst case effects of un-modeled dynamics and disturbances.  
The need for robustness is increasingly important in the context of the power grid as it becomes more reliant on intermittent distributed energy resources, such as renewables.  
However, as mentioned, $\hinf$ analysis assumes that all such distributed energy resources coordinate with each other to destabilize the power network, which may be overly conservative and lead to loss of efficiency.  
Rather, we propose using the $k$-sparse $\hinf$ norm of the system to identify and quantify vulnerabilities of the system to potentially more realistic disturbances.
%In addition, based on our analysis, the system operator may avoid to install an uncertain system on such weak points, which is a complimentary tool for the regularization for the design \cite{matni2014regularization}

\textbf{Robustness analysis for consensus network:}
The well studied problem of consensus (or synchronization) \cite{OlfatiSaber:cm,Boyd:bu, jadbabaie2003coordination} is one in which a set of agents seek to converge to a common value using simple local averaging rules. 
When these local rules are linear and time invariant, the consensus protocol can be modeled as an LTI system.
In this case, a system dynamics $A$ satisfies the following properties \cite{Xiao:2004gg}: $A\mathbf{1} = \mathbf{1}$, $A^{\top}\mathbf{1} = \mathbf{1}$, and $\rho(A - \frac{1}{n} \mathbf{1}\mathbf{1}^{\top}) < 1$, where $n$ is the number of nodes in the network. 

Although typically considered in a disturbance free setting, it is also natural to ask how much local disturbances applied to individual agents can affect the system's ability to reach consensus. 
Concretely, assume that each agent can be corrupted by a separate disturbance, i.e., that $B = I_n$, and  we measure the effect of the disturbances on the deviation of each state $x^j_k$ from the consensus value, as encoded by $z^j_k = x^j_k - \frac{1}{n}  \sum_i x^i_k$, such that $C = I_n - \frac{1}{n}\mathbf{1}\mathbf{1}^{\top}$, and $D = 0$.
Note that the marginally stable mode of $A$ is unobservable with respect to the measured output defined by $C$, and the system has a finite $\Hinf$ norm and $k$-sparse $\Hinf$ norm.

Whereas the $\Hinf$ norm of the resulting system measures the effects of a worst-case attack on all agents, the $k$-sparse $\Hinf$ norm measures the effects of worst-case attack on only $k$ agents.
From an attacker's perspective, this may result in a more realistically implementable strategy, and from a system designer's perspective, this provides valuable information as to which agents should be most closely monitored and protected from attack.

\subsection{The $k$-sparse minimal gain of a system}
We can also define the minimal $k$-sparse gain of system $\M$, which we denote by $\mink{\M}$ as
\begin{equation}
\begin{aligned}
\{\mink{\M}\}^2 := ~
& \underset{\mathbf{w}, \mathbf{x}} {\text{minimize}}
& & \|C\mathbf{x}+D\mathbf{w}\|_P^2\\
&\text{subject to}
&& x_{k+1} = Ax_k + Bw_k\\
&&& x_0 = 0\\
&&& \|\mathbf{w}\|_P^2 \geq 1\\
&&& \card{\mathbf{w}} \leq k.
\end{aligned}
\label{eq:minimization}
\end{equation}

\textbf{Privacy:} An immediate interpretation of this optimization problem is in terms of \emph{privacy}.  Suppose that a publicly available variable is defined by $z_k = Cx_k$, and that a user wishes to transfer at least $\gamma$ units of power to $y_k = Gx_k + Hw_k$ while minimizing their effect on the public variable.  The optimal action for the user to take can be determined by solving optimization problem \eqref{eq:minimization} with the added constraint
\begin{equation}
 \|G\mathbf{x} + H \mathbf{w}\|_P^2 \geq \gamma^2
 \label{eq:privacy}
\end{equation}

\textbf{System security:} One can also view the user in the above scenario as an attacker, and the publicly available variable as a system monitor: in this case, the optimal input $\vct w_\star$ corresponds to the least detectable input that still disrupts the output $\vct y$ by $\gamma$ units of power.  
Allowing for sparse optimal inputs $\vct w_\star$ makes for more realistically implementable actions by either a user or an attacker.

% The following is from Nik and integrated by SYOU
\subsection{Connections to the Restricted Isometry Property and Regularization for Design}
Our problem formulation seeks the minimal and maximal gains of a linear operator restricted to $k$-sparse subspaces.
When the linear operator is a static matrix $D$, instead of a dynamical system ($A = B = C = 0$), then the cardinality constrained optimization problems \eqref{eq:k-sparse-hinfinity} and \eqref{eq:minimization} compute precisely the maximal and minimal \emph{restricted eigenvalues}  \cite{bickel2009simultaneous} of the matrix $D^\top D$, that is the maximal and minimal gains of $D$ restricted to sparse subspaces. They are also closely linked to the Restricted Isometry Property (RIP) constant of the matrix, which can be used to state conditions for the recovery of sparse vectors \cite{candes2008restricted} via convex optimization, and can be used to perform sparse principal component analysis (sPCA) \cite{d2007direct}.  
We can therefore view optimization problem \eqref{eq:k-sparse-hinfinity} as a tool for bounding the restricted eigenvalues of an infinite dimensional LTI operator acting on signals in $\ell_2$.  

Moreover, the $k$-sparse $\Hinf$ norm and the $k$-sparse minimal gain of a system also have natural connections to the Regularization for Design (RFD) framework developed in  \cite{matni2014regularization}.  In the RFD framework, atomic norms \cite{chandrasekaran2012convex} are added as convex penalties to traditional model matching problems in order to design architecturally simple controllers.  Further, control theoretic analogs to the recovery conditions found in the structured inference literature are stated in terms of restricted gains that are closely related to the $k$-sparse $\Hinf$ norm and $k$-sparse minimal gain of a system -- we are currently actively exploring the application of the computational methods developed in this paper to computing bounds on these restricted gains.

\section{SDP relaxation of $k$-sparse $\mathcal{H}_{\infty}$ analysis}
%\subfile{../chapters/review_of_h_infinity}
As posed, optimization problems \eqref{eq:k-sparse-hinfinity}  and \eqref{eq:minimization} are intractable:  the optimization variables are infinite dimensional, and the cardinality constraint introduces a combinatorial aspect to the problem.  In order to develop a computationally tractable framework, we propose an SDP based convex relaxation of the $k$-sparse $\hinf$ norm \eqref{eq:k-sparse-hinfinity} and the $k$-sparse minimal gain \eqref{eq:minimization}.
We begin by reviewing recent results on traditional $\Hinf$ analysis \cite{you2014h,gattami2013simple}.

\subsection{Review of $\hinf$ analysis}
From previous work \cite{you2014h,gattami2013simple}, we know that using a matrix $$V := \underset{n \rightarrow \infty}{\lim} \frac{1}{N}\sum_{k=0}^{N-1} \begin{bmatrix} {x}_{k} \\ {w}_k\end{bmatrix}\begin{bmatrix} {x}_{k} \\ {w}_k\end{bmatrix}^* \succeq 0,$$ the optimization \eqref{eq:hinfinity} can be transformed to the following equivalent finite dimensional semidefinite program:

\begin{equation}
\begin{aligned}
& \underset{V \succeq 0} {\text{maximize}}
& & \trace{\begin{bmatrix} C^*C & C^*D\\D^*C & D^*D\end{bmatrix}V}\\
&\text{subject to}
&& \begin{bmatrix} I & 0 \end{bmatrix} V \begin{bmatrix} I\\ 0\end{bmatrix} = \begin{bmatrix} A & B \end{bmatrix} V \begin{bmatrix} A^*\\B^* \end{bmatrix}\\
&&& \sum_{i=n+1}^{n+m} V_{ii} \leq 1,
\end{aligned}
\label{eq:hinfinity-sdp}
\end{equation}
where $n$ is the dimension of the state $\mathbf{x}$, and $m$ is the dimension of the disturbance $\mathbf{w}$.
The key idea of the proof is to construct the sinusoid $\mathbf{w}$ that achieves the $\mathcal{H}_{\infty}$ norm using a rank one solution of the semidefinite program \eqref{eq:hinfinity-sdp}.

In the construction of $\mathbf{w}$, there is no prior structure imposed on $\mathbf{w}$.
This means that, in general, all $m$ disturbance channel are active and must coordinate amongst themselves.

\subsection{SDP relaxation of $k$-sparse $\mathcal{H}_{\infty}$ analysis}
Building on the result of the previous section, we propose and analyze a semidefinite relaxation of the optimization problem \eqref{eq:k-sparse-hinfinity} that can be used to compute an \emph{upper bound} to the $k$-sparse $\hinf$ norm of a system.  The relaxation to the $k$-sparse minimal gain of a system \eqref{eq:minimization} is analogous, and stated without proof.

To begin with, let us use the matrix $$V := \underset{n \rightarrow \infty}{\lim} \frac{1}{N}\sum_{k=0}^{N-1} \begin{bmatrix} {x}_{k} \\ {w}_k\end{bmatrix}\begin{bmatrix} {x}_{k} \\ {w}_k\end{bmatrix}^* \succeq 0,$$ as in $\mathcal{H}_{\infty}$ analysis.
For notational convenience, we partition $V = \begin{bmatrix} X & R\\R^* & W\end{bmatrix}$ where $X \in \mathbb{C}^{n \times n}$, and $W \in \mathbb{C}^{m\times m}$.

Now we have the following proposition.

\begin{proposition}
${\card{\mathbf{w}}} \leq k$
 if and only if  $\card{\diag{W}} \leq k$.
\end{proposition}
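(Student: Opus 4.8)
The plan is to prove the slightly stronger statement that $\card{\mathbf{w}} = \card{\diag{W}}$ exactly, from which the claimed equivalence is immediate in both directions. The entire argument rests on a single identification: the $i$-th diagonal entry of $W$ is the squared power semi-norm of the $i$-th channel of $\mathbf{w}$.

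First I would unpack the definition of $W$. Writing the value of the disturbance signal at time $t$ as $w_t = (w_t^{(1)},\dots,w_t^{(m)})$ (I use $t$ for the time index here to avoid clashing with the sparsity parameter $k$), reading off the $(i,i)$ entry of $W = \lim_{N\to\infty}\frac{1}{N}\sum_{t=0}^{N-1} w_t w_t^*$ gives $W_{ii} = \lim_{N\to\infty}\frac{1}{N}\sum_{t=0}^{N-1} |w_t^{(i)}|^2 = \|\mathbf{w}_i\|_P^2$, where $\mathbf{w}_i$ denotes the $i$-th channel of $\mathbf{w}$. In particular $W_{ii}\ge 0$, and $W_{ii}>0$ if and only if $\|\mathbf{w}_i\|_P>0$.

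Next I would translate both cardinality constraints onto the same index set. By the definition of $\card{\mathbf{w}}$ recalled in the footnote, $\card{\mathbf{w}} = |\{\, i : \|\mathbf{w}_i\|_P > 0 \,\}| = |\{\, i : \|\mathbf{w}_i\|_P^2 > 0 \,\}|$, and by the identification above this equals $|\{\, i : W_{ii} > 0 \,\}|$, which is exactly $\card{\diag{W}}$ since the cardinality of a vector counts its nonzero entries. Hence $\card{\mathbf{w}} = \card{\diag{W}}$, so one is $\le k$ precisely when the other is; this handles both implications of the proposition at once (in the ``if'' direction one starts from a $V=\begin{bmatrix} X & R\\ R^* & W\end{bmatrix}$ associated with a trajectory $(\mathbf{x},\mathbf{w})$ and reads the same equality backwards).

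I do not anticipate a genuine obstacle. The only point requiring mild care is that the limit defining $W$ — equivalently, each power semi-norm $\|\mathbf{w}_i\|_P$ — is well-defined; but this is exactly the standing hypothesis under which $V$, and therefore its lower-right block $W$, was introduced as the time-averaged second-moment matrix of the trajectory $(\mathbf{x},\mathbf{w})$ in the $\Hinf$ analysis of the previous subsection, so no extra work is needed. Everything else is bookkeeping with the definition of cardinality.
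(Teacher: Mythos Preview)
Your proposal is correct and follows essentially the same approach as the paper: both arguments establish the stronger equality $\card{\mathbf{w}} = \card{\diag{W}}$ by observing that $W_{ii} = \|\mathbf{w}_i\|_P^2$ directly from the definition of $W$. Your write-up is more explicit about the intermediate steps, but the underlying idea is identical.
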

\begin{proof}
From the definition, $W_{ii} = \|\mathbf{w}_i\|_P^2$, where $W_{ii}$ is $(i,i)$ entry of $W$ and $\mathbf{w}_i$ is the $i$th component of the vector-valued signal $\mathbf{w}$.
Therefore ${\card{\mathbf{w}}} = \card{\diag{W}}$.
\end{proof}
%This shows a clear advantage using the approach based on \cite{you2014h,gattami2013simple} that is the matrix $V$ contains an information on the disturbance $\mathbf{w}$.
%On the other hand, Kalman--Yakubovich--Popov (KYP) lemma \cite{Rantzer:2011wn} based approach does not contain an information on $\mathbf{w}$, so it may not be possible to impose this type of structural requirement on the disturbance.

By applying the same procedure from \cite{you2014h} used to derive the SDP used for $\mathcal{H}_{\infty}$ analysis, we obtain the following optimization problem, which provides an upper bound of \eqref{eq:k-sparse-hinfinity}.

\begin{equation}
\begin{aligned}
& \underset{X, R, W} {\text{maximize}}
& & \trace{\begin{bmatrix} C^*C & C^*D\\D^*C & D^*D\end{bmatrix}\begin{bmatrix} X & R\\ R^* & W\end{bmatrix}}\\
&\text{subject to}
&& X = \begin{bmatrix} A & B \end{bmatrix} \begin{bmatrix} X & R\\ R^* & W\end{bmatrix}\begin{bmatrix}A^*\\B^*\end{bmatrix}\\
&&& \trace{W} \leq 1\\
&&& \card{\diag{W}} \leq k\\
&&& \begin{bmatrix} X & R\\ R^* & W\end{bmatrix} \succeq 0.
\end{aligned}
\label{eq:k-sparse-hinfinity-sdp}
\end{equation}

For the standard $\Hinf$ problem, this SDP relaxation is tight: the proof consists of constructing a disturbance $\mathbf{w}$ that achieves the optimal value of the SDP.
Similarly, once a solution to optimization problem \eqref{eq:k-sparse-hinfinity-sdp} is obtained, we can consider a system with disturbance inputs specified by the support of the optimal disturbance, and thus apply the methods of  \cite{you2014h}.
Thus, the cardinality constrained SDP \eqref{eq:k-sparse-hinfinity-sdp}  is in fact equivalent to $k$-sparse $\hinf$ optimization \eqref{eq:k-sparse-hinfinity}.

%By restricting the support of the disturbance, we can easily construct such $\mathbf{w}$ whose cardinality is up to $k$
%By restricting the support of the disturbance, we can easily show that this relaxation is exac
%In this case, such a construction is unknown, so this will be an upper bound in general. \textcolor{blue}{Is this true?  For each fixed sparsity pattern assigned to $\mathbf w$, the problem reduces to a standard $\Hinf$ problem, so shouldn't the relaxation be exact?}
In applying the techniques from \cite{you2014h}, we have reduced the optimization problem to a finite dimensional semidefinite program with an added cardinality constraint $\card{\diag{W}} \leq k$.  In order to circumvent the intractability of this constraint, we propose using an $\ell_1$ relaxation \cite{tibshirani1996regression}.  This approach is inspired by \cite{d2007direct}, in which the authors consider the $\ell_1$ relaxation of an analogous cardinality constraint to obtain a semidefinite relaxation of the  sparse PCA problem, in which one seeks the leading sparse singular vector of a matrix (as mentioned previously, this is closely related to the RIP constant of a matrix and to analogous quantities in RFD).
%An extension of this idea in the sparse PCA setting where one would like to find a sparse singular vector of a matrix, the authors consider a $l_1$ relaxation of the cardinality constraint to obtain a SDP relaxation of the sparse PCA problem, and show that the true optimal solution can be obtained in certain cases.
In order to adapt this idea to our problem formulation, we need the following observation.

\begin{proposition}
Consider $W \in \mathbb{C}^{n \times n}$ such that $W \succeq 0$, $\trace{W} \leq 1$.
Then, $\mathbf{1}^T|W|\mathbf{1} \leq n$.
\label{prop:l1-bound}
\end{proposition}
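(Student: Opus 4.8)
The plan is to bound the sum of all entrywise absolute values $\mathbf{1}^\top|W|\mathbf{1} = \sum_{i,j}|W_{ij}|$ by first controlling each off-diagonal magnitude by the diagonal entries, and then invoking Cauchy--Schwarz against the trace constraint. The first step is to observe that positive semidefiniteness of $W$ forces every principal $2\times 2$ submatrix $\begin{bmatrix} W_{ii} & W_{ij} \\ W_{ji} & W_{jj}\end{bmatrix}$ to be positive semidefinite, hence $|W_{ij}|^2 \leq W_{ii}W_{jj}$ for all $i,j$ (this also handles $i=j$ trivially). In particular the diagonal entries $W_{ii}$ are real and nonnegative, so it makes sense to set $d_i := \sqrt{W_{ii}} \geq 0$.

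Next I would chain the inequalities: $\mathbf{1}^\top|W|\mathbf{1} = \sum_{i,j}|W_{ij}| \leq \sum_{i,j} d_i d_j = \left(\sum_i d_i\right)^2$. Then Cauchy--Schwarz (equivalently, the inequality between the $\ell_1$ and $\ell_2$ norms of the vector $(d_1,\dots,d_n)$) gives $\left(\sum_i d_i\right)^2 \leq n \sum_i d_i^2 = n\,\trace{W} \leq n$, using $\trace{W}\leq 1$ in the last step. Combining, $\mathbf{1}^\top|W|\mathbf{1}\leq n$.

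There is no real obstacle here; the only thing to get right is the direction of the $2\times 2$ minor bound and the fact that $|W_{ij}|$ denotes the complex modulus (so $|W_{ij}|^2 = W_{ij}\overline{W_{ij}} = W_{ij}W_{ji}$ since $W = W^*$), which is exactly what the nonnegativity of the $2\times 2$ principal minor delivers. One could alternatively phrase the whole argument as: $\mathbf{1}^\top|W|\mathbf{1} \leq d^\top \mathbf{1}\mathbf{1}^\top d = d^\top(\mathbf{1}\mathbf{1}^\top)d \leq \lambda_{\max}(\mathbf{1}\mathbf{1}^\top)\|d\|_2^2 = n\|d\|_2^2 = n\,\trace{W}$, which packages the Cauchy--Schwarz step as an eigenvalue bound on the rank-one matrix $\mathbf{1}\mathbf{1}^\top$; I would mention this only if a more ``operator-theoretic'' flavor is desired, but the elementary version above is cleanest.
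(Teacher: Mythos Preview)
Your proof is correct, but it follows a genuinely different route from the paper's argument. The paper introduces an auxiliary Hermitian ``phase matrix'' $H$ with unit diagonal and $H_{ij}=e^{\mathbf{i}\angle W_{ij}}$, so that $\mathbf{1}^\top|W|\mathbf{1}=\trace{H^*W}$, and then applies Von~Neumann's trace inequality $\trace{H^*W}\le\sum_i\sigma_i(H)\sigma_i(W)$ together with the bound $\sigma_1(H)\le n$. Your argument instead stays entirely at the entrywise level: you use the $2\times 2$ principal-minor inequality $|W_{ij}|\le\sqrt{W_{ii}W_{jj}}$ to dominate $\mathbf{1}^\top|W|\mathbf{1}$ by $\bigl(\sum_i\sqrt{W_{ii}}\bigr)^2$, and finish with Cauchy--Schwarz against the trace. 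Your approach is more elementary and self-contained (no appeal to singular-value inequalities), and it makes the role of the diagonal entries transparent, which meshes nicely with the subsequent proposition in the paper where the cardinality constraint acts precisely on $\diag{W}$. The paper's approach, on the other hand, is closer in spirit to an SDP duality argument (the phase matrix $H$ is essentially a dual certificate) and also exhibits the extremal case $W=\tfrac{1}{n}\mathbf{1}\mathbf{1}^\top$ directly; your proof attains equality at the same matrix, though you do not mention tightness.
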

\begin{proof}
Consider a Hermitian matrix $H$ where
\BEAS
H_{ij} &=& 
\begin{cases}
1 & \text{if $i = j$} \\
 e^{\mathbf{i}\theta_{ij}} & \text{if $i \neq j$},
 \end{cases}
\EEAS
for some $\theta_{ij}$.
If we construct ${H}$ such that ${H}_{ij} = e^{\mathbf{i} \angle W_{ij}}$, then 
$\mathbf{1}^T|W|\mathbf{1} = \trace{{H}^*W}$.
This shows that $\mathbf{1}^T|W|\mathbf{1} \leq \sup_H \trace{H^*W}$,
and from the Von Neumann's trace inequality \cite{mirsky1975trace}, we have
\BEAS
\trace{H^*W} \leq \sum_{i} \sigma_i(W)\sigma_i(H),
\EEAS
where $\sigma_i$ is the $i$th singular value of the matrix.
Furthermore, by definition of $H$ we have $\sigma_1(H) \leq \sum_i \sigma_i(H) = \trace{H} = n$. 
Therefore,
\BEAS
&&\trace{H^*W}
\leq \sum_{i} \sigma_i(W)\sigma_i(H)
\leq \sigma_1(H) \sum_{i} \sigma_i(W)\\
&\leq& n \trace{W} \leq n,
\EEAS
and $\mathbf{1}^T|W|\mathbf{1} \leq \sup_H \trace{H^*W} \leq n$.
Notice that this upper bound is achieved by $W = \frac{1}{n}\mathbf{1}\mathbf{1}^T$, which shows the inequality is tight.
\end{proof}
We can now connect the $\ell_1$ norm bound to the cardinality constraint of optimization problem \eqref{eq:k-sparse-hinfinity}.
\begin{proposition}
Consider a positive semidefinite matrix $W$ with $\trace{W} \leq 1$ and $\card{\diag{W}} \leq k$.
Then, $\mathbf{1}^T|W|\mathbf{1} \leq k$.
\label{prop:general-l1-bound}
\end{proposition}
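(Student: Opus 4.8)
The plan is to reduce Proposition~\ref{prop:general-l1-bound} to Proposition~\ref{prop:l1-bound} by restricting attention to the rows and columns of $W$ on which the diagonal is supported. Let $S := \{ i : W_{ii} \neq 0 \}$, so that $|S| = \card{\diag{W}} \leq k$. First I would observe that since $W \succeq 0$, any zero diagonal entry $W_{ii} = 0$ forces the entire $i$th row and column of $W$ to vanish: indeed, for $W \succeq 0$ we have $|W_{ij}|^2 \leq W_{ii} W_{jj}$, so $W_{ii} = 0$ implies $W_{ij} = 0$ for all $j$. Hence $W$ is, up to a permutation, block-diagonal with a principal submatrix $W_S$ (the rows and columns indexed by $S$) in one block and a zero block elsewhere.

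Next I would note that the submatrix $W_S \in \mathbb{C}^{|S| \times |S|}$ inherits the relevant properties: $W_S \succeq 0$ because it is a principal submatrix of a positive semidefinite matrix, and $\trace{W_S} = \sum_{i \in S} W_{ii} = \trace{W} \leq 1$ since the off-support diagonal entries are zero. Applying Proposition~\ref{prop:l1-bound} to $W_S$ with $n$ replaced by $|S|$ gives $\mathbf{1}^T |W_S| \mathbf{1} \leq |S|$. Finally, because all entries of $W$ outside the $S \times S$ block are zero, $\mathbf{1}^T |W| \mathbf{1} = \mathbf{1}^T |W_S| \mathbf{1} \leq |S| \leq k$, which is the claimed bound.

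The only step requiring any care is the first one — justifying that a zero diagonal entry of a positive semidefinite matrix kills its whole row and column — but this is the standard consequence of the $2 \times 2$ principal minor inequality $|W_{ij}|^2 \leq W_{ii} W_{jj}$, so there is no real obstacle. Everything else is bookkeeping: identifying the support, passing to the principal submatrix, and invoking the previous proposition. As with Proposition~\ref{prop:l1-bound}, the bound is tight, attained by $W = \tfrac{1}{k}\mathbf{1}_S\mathbf{1}_S^T$ supported on any $k$-element set $S$.
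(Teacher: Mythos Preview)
Your proof is correct and follows essentially the same approach as the paper: identify the support of the diagonal, argue that a zero diagonal entry of a positive semidefinite matrix forces its entire row and column to vanish (the paper phrases this via the Schur complement, you via the $2\times 2$ principal minor inequality $|W_{ij}|^2 \leq W_{ii}W_{jj}$, which amounts to the same thing), and then apply Proposition~\ref{prop:l1-bound} to the resulting $|S|\times|S|$ principal submatrix.
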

\begin{proof}
Without loss of generality, we can assume that $W_{11}, \cdots, W_{ii}$ for $1 \leq i \leq k$ are not zero,  where $W_{ii}$ is $(i,i)$th entry of $W$.
Then from the Schur complement, we can easily check that $W$ should have the form 
\BEAS
W =  \left[\begin{array}{ccc} \tilde{W} & \vline & 0\\ \hline 0 & \vline & 0\end{array}\right],
\EEAS
where $\tilde{W}$ is a $i \times i$ Hermitian matrix.
Therefore, from  Proposition \ref{prop:l1-bound}, $\mathbf{1}^T|W|\mathbf{1} = \mathbf{1}^T|\tilde{W}|\mathbf{1} \leq i \leq k$, which concludes the proof.
\end{proof}

In the cardinality constrained problem \eqref{eq:k-sparse-hinfinity-sdp}, the $W$ matrix satisfies the requirements of Proposition \ref{prop:general-l1-bound}. 
This shows that if we replace the cardinality constraint on $W$ in optimization problem \eqref{eq:k-sparse-hinfinity-sdp} by a suitable $\ell_1$ norm bound, then we have a larger feasible set.
Although this procedure provides an upper bound to \eqref{eq:k-sparse-hinfinity-sdp}, the resulting optimization becomes a semidefinite program, which can be solved efficiently \cite{boyd2004convex}.
Therefore, we propose the following $\ell_1$ based relaxation of \eqref{eq:k-sparse-hinfinity-sdp}, which is the main optimization problem in this paper.

\begin{align}
\{\ksdpnorm{\M}\}^2 := ~
& \underset{X, R, W} {\text{max}}
& & \trace{\begin{bmatrix} C^*C & C^*D\\D^*C & D^*D\end{bmatrix}\begin{bmatrix} X & R\\ R^* & W\end{bmatrix}} \nonumber\\
&\text{s.t.}
&& X = \begin{bmatrix} A & B \end{bmatrix} \begin{bmatrix} X & R\\ R^* & W\end{bmatrix}\begin{bmatrix}A^*\\B^*\end{bmatrix}\nonumber\\
&&& \trace{W} \leq 1 \label{eq:sparse-hinfinity-sdp}\\
&&& \mathbf{1}^T|W|\mathbf{1} \leq k\nonumber\\
&&& \begin{bmatrix} X & R\\ R^* & W\end{bmatrix} \succeq 0.\nonumber
\end{align}
In addition, it should be obvious that $\knorm{\M} \leq \ksdpnorm{\M}$.
%Notice that this is an SDP and can be solved very efficiently. 
A careful remark is that for a complex matrix, $|W|$ should be treated as an SOCP, not an LP  \cite{kim2007interior}.
%Since $V_{ii}$ is always real and nonnegative, we do not need an absolute value of $V_{ii}$.
%By increasing $\lambda \geq 0$, we can encourage the sparsity.

Although we omit the details, a similar argument for continuous time systems yields
\begin{equation}
\begin{aligned}
& \underset{X,R,W} {\text{maximize}}
& & \trace{\begin{bmatrix} C^*C & C^*D\\D^*C & D^*D\end{bmatrix}\begin{bmatrix} X & R\\ R^* & W\end{bmatrix}}\\
&\text{subject to}
&& XA^* + AX + R^*B^* + BR = 0\\
&&&\trace{W} \leq 1\\
&&&  \mathbf{1}^T|W|\mathbf{1} \leq k\\
&&& \begin{bmatrix} X & R\\ R^* & W\end{bmatrix} \succeq 0.
\end{aligned}
\label{eq:cont-sparse-hinfinity-sdp}
\end{equation}

\subsection{Extension to $k$-sparse minimal gain}
In the previous section, we introduced a $k$-sparse minimal gain.
A similar approach can be used to obtain the following SDP relaxation of \eqref{eq:minimization}.

\begin{align}
\{\underline{\mu}^{sdp}_k(\M)\}^2 := ~
& \underset{X, R, W} {\text{min}}
& & \trace{\begin{bmatrix} C^*C & C^*D\\D^*C & D^*D\end{bmatrix}\begin{bmatrix} X & R\\ R^* & W\end{bmatrix}} \nonumber\\
&\text{s.t.}
&& X = \begin{bmatrix} A & B \end{bmatrix} \begin{bmatrix} X & R\\ R^* & W\end{bmatrix}\begin{bmatrix}A^*\\B^*\end{bmatrix}\nonumber\\
&&& \trace{W} \geq 1 \nonumber\\
&&& \mathbf{1}^T|W|\mathbf{1} \leq k\nonumber\\
&&& \begin{bmatrix} X & R\\ R^* & W\end{bmatrix} \succeq 0\nonumber
\end{align}
Similarly, for the continuous time case, we have
\begin{equation}
\begin{aligned}
& \underset{X,R,W} {\text{minimize}}
& & \trace{\begin{bmatrix} C^*C & C^*D\\D^*C & D^*D\end{bmatrix}\begin{bmatrix} X & R\\ R^* & W\end{bmatrix}}\\
&\text{subject to}
&& XA^* + AX + R^*B^* + BR = 0\\
&&&\trace{W} \geq 1\\
&&&  \mathbf{1}^T|W|\mathbf{1} \leq k\\
&&& \begin{bmatrix} X & R\\ R^* & W\end{bmatrix} \succeq 0.
\end{aligned}
\end{equation}
%
%Although the above problem is well-defined and has very interesting implications in the CPS security analysis, but since this is another research direction, we would not further investigate this topic in here.

\subsection{Rounding heuristic for solution refinement}
Let $W^{\star}$ be the optimal solution to optimization problem \eqref{eq:sparse-hinfinity-sdp}.
Since this matrix contains information about the worst-case disturbance, we can extract candidate worst case disturbance channels, and use those to obtain a corresponding lower bound to the value of optimization problem \eqref{eq:k-sparse-hinfinity}.
The approach is simple: identify the top $k$ entries of $\diag{W}$, say $\{W_{i_1i_1}, W_{i_2i_2},\cdots,W_{i_ki_k}\}$, and then restrict $B$ and $D$ to the column space corresponding to these disturbance channels.
We can then compute the traditional $\mathcal{H}_{\infty}$ norm of the system defined by these restricted $B$ and $D$ matrices using classical methods.  As mentioned, as we are choosing specific disturbance channels, this procedure yields a lower bound of the $k$-sparse $\Hinf$ norm \eqref{eq:k-sparse-hinfinity} of a system.
The procedure can thus be summarized as follows:

\textbf{Rounding heuristic:}
\begin{enumerate}
\item Solve \eqref{eq:sparse-hinfinity-sdp} to obtain $W^{\star}$.
\item Find the indices $\{i_1,\cdots,i_k\}$ such that $W^{\star}_{i_1i_1} \geq \cdots \geq W^{\star}_{i_ki_k} \geq \cdots \geq W_{i_ni_n}$.
\item Construct $E := \begin{bmatrix} e_{i_1} & \cdots & e_{i_k}\end{bmatrix} \in \mathbb{R}^{m \times k}$ using a standard basis $\{e_i\} \in \mathbb{R}^m$.
\item Let $\tilde{B} := BE$, $\tilde{D} = DE$, and obtain $\kroundnorm{\M} := \|\tilde{B}(e^{\mathbf{i}\theta}I - A)^{-1}C + \tilde{D}\|_{\infty}$.
\end{enumerate}
Notice that  step 3 chooses $i_1,\cdots i_k$ to be the active disturbance channels.
From this rounding procedure we obtain the inequality
\BEAS
\kroundnorm{\M} \leq \knorm{\M} \leq \ksdpnorm{\M}
\EEAS
Therefore, if the gap between $\kroundnorm{\M}$ and $\ksdpnorm{\M}$ is not large, then $\kroundnorm{\M}$ effectively solves the $k$-sparse $\hinf$ problem and returns a candidate set of worst case disturbance channels.
Notice that this heuristic can also be applied to the continuous time case and the minimal gain computation, but we omit these details.

\section{Dual problem \& Controller Synthesis}
As optimization problem \eqref{eq:sparse-hinfinity-sdp} is an SDP, it is natural to consider its Lagrangian dual problem.
To do this, let us begin with the following observation.
\begin{proposition}
For $w \geq 0$, $\lambda \in \mathbb{C}$, 
\BEAS
\sup_{x \in \mathbb{C}} \{-w|x| + \re{\lambda x}\}
=
\begin{cases}
0 &\text{if $|\lambda| \leq w$}\\
+\infty &\text{otherwise}
\end{cases}.
\EEAS
\label{prop:l1-dual}
\end{proposition}
\begin{proof}
Suppose $|\lambda| > w$. Let $x = \alpha \lambda^*$. Then 
\BEAS
-w|x| + \re{\lambda x} = \alpha|\lambda|(|\lambda|-w).
\EEAS
By taking $\alpha \rightarrow \infty$, we obtain the result.

Suppose $|\lambda| \leq w$. From Cauchy-Schwartz inequality,
\BEAS
-w|x| +  \re{\lambda x} \leq -w|x| + |\lambda||x| \leq (|\lambda|-w)|x| \leq 0,
\EEAS
for all $x \in \mathbb{C}$. Since the upper bound is achieved by $x = 0$, we can conclude the proof.
\end{proof}

With this technical tool in hand, we may proceed to derive the dual to optimization problem \eqref{eq:sparse-hinfinity-sdp}.
First, we form the Lagrangian function in terms of $V = \begin{bmatrix} X & R\\R^* &W\end{bmatrix}$.
\BEAS
&&L(V, P, Q, \lambda, t) := \\
&& \trace{QV} + \trace{\begin{bmatrix} C^*C & C^*D\\D^*C & D^*D\end{bmatrix}V}\\
&&+ \trace{P\left(\begin{bmatrix} A & B \end{bmatrix} V \begin{bmatrix} A^* \\ B^* \end{bmatrix} - \begin{bmatrix} I & 0 \end{bmatrix} V \begin{bmatrix} I \\ 0 \end{bmatrix}\right)}\\
&&+ \lambda\left(1-\trace{\begin{bmatrix} 0 & 0\\0 & I\end{bmatrix}V}\right)\\
&&+ t\left(k - \trace{\begin{bmatrix} 0 & 0\\ 0 & \mathbf{1}\mathbf{1}^T\end{bmatrix} |V|}\right),
\EEAS
where $P = P^*$, $Q \succeq 0$, $\lambda \geq 0$, $t \geq 0$.

Using cyclic property of the trace operator and from Proposition \ref{prop:l1-dual}, we can obtain the dual function $d(Q, P,\lambda,t) := \sup_{V = V^*} L(V,P,\lambda,t)$ which becomes $\lambda + k \cdot t$ when,
\begin{align}
&\left|Q 
+ \begin{bmatrix} C^*C & C^*D\\D^*C & D^*D - \lambda I\end{bmatrix} 
+ \begin{bmatrix} A^*PA-P & A^*PB\\B^*PA & B^*PB\end{bmatrix}
\right| \nonumber \\
&\leq \begin{bmatrix} 0 & 0\\ 0 & t\mathbf{1}\mathbf{1}^T\end{bmatrix}, \label{eq:complicated-dual}
\end{align}
where the inequality $\leq$ is a component-wise inequality. In addition, $d(P, \lambda, t) = +\infty$ if $(Q, P, \lambda, t)$ does not satisfy \eqref{eq:complicated-dual}.
By defining $Y=Y^*$ to be a right bottom block of \eqref{eq:complicated-dual}, we obtain the following dual program of \eqref{eq:sparse-hinfinity-sdp}.

\begin{equation}
\begin{aligned}
& \underset{P, Y, \lambda, t} {\text{minimize}}
& & \lambda + k\cdot t\\
&\text{subject to}
&&
\begin{bmatrix}
A^*PA - P & A^*PB\\
 B^*PA & B^*PB - \lambda I - Y
 \end{bmatrix}
\\
&&& +\begin{bmatrix}
C^*C & C^*D\\
D^*C & D^*D
\end{bmatrix}
\preceq 0\\
%\begin{bmatrix}
%A & B\\
%C & D
%\end{bmatrix}^*
%\begin{bmatrix}
%P & 0\\
%0 & I
%\end{bmatrix}
%\begin{bmatrix}
%A & B\\
%C & D
%\end{bmatrix}\\
%&&&
%-
% \begin{bmatrix}
%P & 0\\
%0 & \lambda I + Y
%\end{bmatrix}
%\preceq 0\\
&&&
|Y| \leq t\mathbf{1}\mathbf{1}^T\\
&&& P=P^*, Y=Y^*, t \geq 0, \lambda \geq 0.
\end{aligned}
\label{eq:dual-sparse-hinfinity-sdp}
\end{equation}

Notice that if we set $t = 0$, then we recover the SDP derived from the KYP lemma which computes the $\Hinf$ norm of the system.
It is clear that $t=0$ is a suboptimal solution of \eqref{eq:dual-sparse-hinfinity-sdp}, and therefore we can easily see that the $\mathcal{H}_{\infty}$ norm is an upper bound of \eqref{eq:dual-sparse-hinfinity-sdp} which is consistent with the definition of $k$-sparse $\mathcal{H}_{\infty}$ norm.

In addition, it can be easily checked that the dual problem \eqref{eq:dual-sparse-hinfinity-sdp} is strictly feasible when $A$ is stable by setting $Y = 0$, $t=1$ and sufficiently large $\lambda$.
With sufficiently large $\lambda$, only left upper block of the LMI constraint becomes relevant, and if $A$ is stable, we can find $P \succ 0$ such that $A^*PA - P + C^*C \prec 0$, and this gives a strictly dual feasible point.
Therefore duality gap is zero.

%[INTERPRETATION OF DUAL?]
Another observation is that if we assume $(A,B,C,D)$ are real matrices, then similar argument as in \cite{you2013lagrangian} shows that all matrices in \eqref{eq:dual-sparse-hinfinity-sdp} can be taken as real matrices.
In this case, the absolute value constraint becomes $-t\mathbf{1}\mathbf{1}^T \leq Y \leq t\mathbf{1}\mathbf{1}^T$, a familiar linear constraint used to impose an element wise $\ell_1$ norm bound on a matrix.

A similar derivation for a continuous time system, \eqref{eq:cont-sparse-hinfinity-sdp}, gives us
\begin{equation}
\begin{aligned}
& \underset{P, Y, \lambda, t} {\text{minimize}}
& & \lambda + k\cdot t\\
&\text{subject to}
&& 
\begin{bmatrix}
A^*P + PA & PB\\
B^*P & -\lambda I - Y
\end{bmatrix}\\
&&& + 
\begin{bmatrix} C^*C & C^*D\\D^*C & D^*D \end{bmatrix} \preceq 0\\
&&&
|Y| \leq t\mathbf{1}\mathbf{1}^T\\
&&& P=P^*, Y=Y^*, t \geq 0, \lambda \geq 0,
\end{aligned}
\label{eq:dual-sparse-c-hinfinity-sdp}
\end{equation}
although we omit the detailed derivation.

\subsection{$k$-sparse $\hinf$ synthesis}
%Based on the dual problem of our SDP relaxation \eqref{eq:dual-sparse-hinfinity-sdp}, using SDP thanks to its KYP like structure.
Here we follow the LMI approach to $\hinf$ controller synthesis from \cite{gahinet1994linear} in order to develop an SDP which designs a controller that minimizes the proposed sdp relaxation of $k$-sparse $\hinf$ measure.
To begin with, consider the dynamical system
\BEAS
x_{k+1} &=& Ax_{k} + B_1 w_{k} + B_2 u_{k}\\
z_{k} &=& C_1 x_{k} + D_{11} w_{k} + D_{12} u_k\\
y_k &=& C_2 x_k + D_{21} w_k ,
\EEAS
with dynamic controller
\BEAS
\zeta_{k+1} &=& A_K \zeta_k + B_K y_k\\
u_k &=& C_K \zeta_k + D_K y_k.
\EEAS
The synthesis goal is to design $A_K, B_K, C_K, D_K$ which minimizes our \emph{SDP relaxation} of the $k$-sparse $\hinf$ norm,  \eqref{eq:sparse-hinfinity-sdp}, of the closed loop system:
\BEAS
%x
\begin{bmatrix}
x_{k+1}\\
\zeta_{k+1}
\end{bmatrix}
&=&
\underbrace{
\begin{bmatrix}
A+B_2D_KC_2 & B_2 C_K\\
B_KC_2 & A_K
\end{bmatrix}}_{A_{cl}}
\begin{bmatrix}
x_{k}\\
\zeta_{k}
\end{bmatrix}\\
&&
+
\underbrace{
\begin{bmatrix}
B_1 + B_2D_KD_{21}\\
B_KD_{21}
\end{bmatrix}}_{B_{cl}}
w_k\\
%z
z_k &=& 
\underbrace{
\begin{bmatrix}
C_1 + D_{12}D_KC_2 & D_{12}C_K
\end{bmatrix}}_{C_{cl}}
\begin{bmatrix}
x_{k}\\
\zeta_{k}
\end{bmatrix}\\
&&+
\underbrace{
(D_{11} + D_{12}D_K D_{21})
}_{D_{cl}}
w_k
\EEAS
with stable $A_{cl}$, \ie, $\rho(A_{cl}) < 1$.
From this stability requirement, together with Lyapunov stability theorem, we can assume that $P \succ 0$ in the LMI in \eqref{eq:dual-sparse-hinfinity-sdp}.
In addition, we assume that the pair $(A,B_2, C_2)$ are {\it{stabilizable}}, \ie, there exists at least one controller $(A_K, B_K, C_K, D_K)$ such that $\rho(A_{cl}) < 1$, otherwise there exists no stabilizing controller so the controller design problem has no sense.

Using $(A_{cl}, B_{cl}, C_{cl}, D_{cl})$, the matrix inequality constraint in the dual of $k$-sparse $\hinf$ analysis, \eqref{eq:dual-sparse-hinfinity-sdp}, is given by
\begin{equation}
\begin{aligned}
\begin{bmatrix}
A_{cl}^*P_{cl}A_{cl} - P_{cl} & A_{cl}^*P_{cl}B_{cl}\\
 B_{cl}^*P_{cl}A_{cl} & B_{cl}^*P_{cl}B_{cl} - \lambda I - Y
\end{bmatrix}\\
+
\begin{bmatrix}
C_{cl}^*C_{cl} & C_{cl}^*D_{cl}\\
D_{cl}^*C_{cl} & D_{cl}^*D_{cl}
\end{bmatrix}
\prec 0
\end{aligned}
\label{eq:closed-loop-LMI}
\end{equation}
Here we have change the non-strict matrix inequality of optimization prpoblem \eqref{eq:dual-sparse-hinfinity-sdp} to a strict inequality.
Since $A_{cl}$ is required to be stable, a strictly feasible solution exists for a given $(A_{cl}, B_{cl}, C_{cl}, D_{cl})$, 
and therefore this variation does not change the optimal value of the following synthesis problem.
\begin{equation}
\begin{aligned}
& \underset{A_K, B_K, C_K, D_K, P_{cl}, Y, \lambda, t} {\text{minimize}}
& & \lambda + k\cdot t\\
&\text{subject to}
&&
\eqref{eq:closed-loop-LMI}, |Y| \leq t\mathbf{1}\mathbf{1}^{\top}, P_{cl} \succ 0\\
&&&Y=Y^*, t \geq 0, \lambda \geq 0.
\end{aligned}
\label{eq:synthesis}
\end{equation}
Notice that the above problem is not convex because the condition  \eqref{eq:closed-loop-LMI} is a bilinear matrix inequality.
However because of its similarity to the standard $\hinf$ synthesis problem, optimziation problem \eqref{eq:synthesis} can be solved using semidefinite programing so long as the order of the controller is taken to be greater than or equal to that of the plant.  To that end, let $T = \lambda I + Y$, then since $T - D_{cl}^*D_{cl} \succ 0$, $T$ is positive definite.
With left and right multiplication of $\begin{bmatrix} I & 0\\0 & T^{-1/2}\end{bmatrix}$ to \eqref{eq:closed-loop-LMI},
we obtain the following equivalent condition of \eqref{eq:closed-loop-LMI}:
\begin{equation}
\begin{aligned}
\begin{bmatrix}
A_{cl}^*P_{cl}A_{cl} - P_{cl} & A_{cl}^*P_{cl}B_{cl}T^{-1/2}\\
T^{-1/2}B_{cl}^*P_{cl}A_{cl} & T^{-1/2}B_{cl}^*P_{cl}B_{cl}T^{-1/2} - I
\end{bmatrix}\\
+
\begin{bmatrix}
C_{cl}^*C_{cl} & C_{cl}^*D_{cl} T^{-1/2}\\
T^{-1/2}D_{cl}^*C_{cl} & T^{-1/2}D_{cl}^*D_{cl}T^{-1/2}
\end{bmatrix}
\prec 0
\end{aligned}
\label{eq:syn-lmi}
\end{equation}
We can immediately see that the above condition is equivalent to the $\hinf$ norm of the {\it{scaled}} closed loop system, $(A_{cl}, B_{cl}T^{-1/2}, C_{cl}, D_{cl}T^{-1/2})$, is less than $1$
, \ie, $\|(C_{cl} (zI - A_{cl})B_{cl} + D_{cl})T^{-1/2}\|_{\infty} < 1$.
Therefore with this representation of \eqref{eq:closed-loop-LMI},
a simple modification of the classic LMI method for $\Hinf$ synthesis 
\cite{gahinet1994linear} yields the following two SDPs which solves \eqref{eq:synthesis} when the controller order is taken to be greater than equal to the plant order.

\begin{equation}
\begin{aligned}
& \underset{P, Q, Y, \lambda, t} {\text{minimize}}
& & \lambda + k\cdot t\\
&\text{subject to}
&&
\begin{bmatrix} P & I_n \\ I_n & Q \end{bmatrix} \succ 0\\
&&& \hspace{-2cm} 
\Pi_c^*
\begin{bmatrix}
AQA^*-A & AQC_1^* & {B}_1\\
C_1QA^* & C_1QC_1^*-I  & {D}_{11} \\
{B}_1^* &{D}_{11}^* & - \lambda I -Y
\end{bmatrix}
\Pi_c \prec 0\\
&&&\hspace{-2cm} 
\Pi_o^*
\begin{bmatrix}
A^*PA-A& A^*P{B}_1 & C_1^*\\
{B}_1^*PA & {B}_1^*P{B}_1- \lambda I - Y  & {D}_{11}^*\\
C_1 & {D}_{11} & -I
\end{bmatrix}
\Pi_o \prec 0\\
&&&|Y| \leq t\mathbf{1}\mathbf{1}^{\top}, Y=Y^*, t \geq 0,  \lambda \geq 0
\end{aligned}
\label{eq:PQ-synthesis}
\end{equation}
where
\begin{align*}
\Pi_c &= 
\begin{bmatrix}
N_c & 0\\
0 & I
\end{bmatrix}
&\Pi_o &= 
\begin{bmatrix}
N_o & 0\\
0 & I
\end{bmatrix}
\end{align*}
\begin{align*}
\range{N_c} &= \ker{\begin{bmatrix} B_2^* & D_{12}^*\end{bmatrix}}, &N_c^*N_c &= I\\
\range{{N}_o} &= \ker{\begin{bmatrix} C_2 & {D}_{21} \end{bmatrix}}, & {N}_o^*{N}_o &= I
\end{align*}

After obtaining $P, Q$ by solving the above SDP \eqref{eq:PQ-synthesis}, we construct $P_{cl}$ by $P_{cl} = \begin{bmatrix} P & P_2^*\\P_2 & I \end{bmatrix}$, where $P_2$ is given by $P-Q^{-1} = P_2P_2^*$.
By applying the Schur complement to  \eqref{eq:closed-loop-LMI}, we have
\begin{equation}
\begin{bmatrix}
P_{cl}^{-1} & 0 & A_{cl} & B_{cl}\\
0 & I & C_{cl} & D_{cl}\\
A_{cl}^* & B_{cl}^* & P_{cl} & 0\\
C_{cl}^* & D_{cl}^* & 0 & \lambda I + Y
\end{bmatrix}
\succ 0
\label{eq:controller-LMI}
\end{equation}
which is clearly an LMI for a fixed $P_{cl}$.
Finally, the following optimization returns the controller achieves the optimal value of \eqref{eq:PQ-synthesis}:
\begin{equation}
\begin{aligned}
& \underset{A_K, B_K, C_K, D_K, Y, \lambda, t} {\text{minimize}}
& & \lambda + k\cdot t\\
&\text{subject to}
&&
\eqref{eq:controller-LMI}, |Y| \leq t\mathbf{1}\mathbf{1}^{\top}, Y=Y^*\\
&&&t \geq 0, \lambda \geq 0.
\end{aligned}
\label{eq:controller-construction}
\end{equation}

In summary, two SDPs \eqref{eq:PQ-synthesis} and \eqref{eq:controller-construction} are needed to construct the controller.
The first step is to solve \eqref{eq:PQ-synthesis} to find $P, Q$ to construct $P_{cl}$, and then solve \eqref{eq:controller-construction} to find the controller $(A_K, B_K, C_K, D_K)$.

\section{Numerical examples}
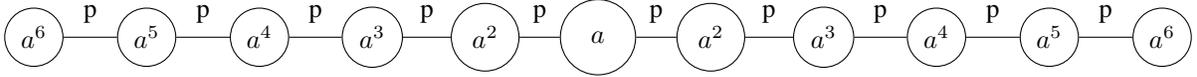
\begin{figure*}
\begin{center}
%graph 1
\begin{tikzpicture}
\node[draw,circle,minimum size=1cm](c) at (0,0) {$a$};
\node[draw,circle,minimum size=.9cm](r1) at (1.5,0) {$a^2$};
\node[draw,circle,minimum size=.9cm](l1) at (-1.5,0) {$a^2$};
\node[draw,circle,minimum size=.8cm](r2) at (3,0) {$a^3$};
\node[draw,circle,minimum size=.8cm](l2) at (-3,0) {$a^3$};
\node[draw,circle,minimum size=.7cm](r3) at (4.5,0) {$a^4$};
\node[draw,circle,minimum size=.7cm](l3) at (-4.5,0) {$a^4$};
\node[draw,circle,minimum size=.6cm](r4) at (6,0) {$a^5$};
\node[draw,circle,minimum size=.6cm](l4) at (-6,0) {$a^5$};
\node[draw,circle,minimum size=.5cm](r5) at (7.5,0) {$a^6$};
\node[draw,circle,minimum size=.5cm](l5) at (-7.5,0) {$a^6$};

\draw (c) -- (r1) node [midway, above=2pt] {p};
\draw (r1) -- (r2) node [midway, above=2pt] {p};
\draw (r2) -- (r3) node [midway, above=2pt] {p};
\draw (r3) -- (r4) node [midway, above=2pt] {p};
\draw (r4) -- (r5) node [midway, above=2pt] {p};

\draw (c) -- (l1) node [midway, above=2pt] {p};
\draw (l1) -- (l2) node [midway, above=2pt] {p};
\draw (l2) -- (l3) node [midway, above=2pt] {p};
\draw (l3) -- (l4) node [midway, above=2pt] {p};
\draw (l4) -- (l5) node [midway, above=2pt] {p};

\end{tikzpicture}
\caption{Homogeneous chain with $n = 5$}
\label{fig:chain}
\end{center}
\end{figure*}

In this section, we demonstrate the usefulness of our approach on various types of systems.  As will be seen, in many cases, the gap between our upper and lower bounds is very small, indicating that our relaxations are fairly tight.
For the optimization, we use CVX \cite{cvx} with SDPT3 \cite{toh1999sdpt3}.

\subsection{A linear chain}

Here we consider a linear chain with $2n+1$ nodes. Each subsystem has one internal state, and these states interact with adjacent states.
We assume that a disturbance can hit each state with unity gain, and the performance is the total sum of energy in each state.
This results in $B = I$, $C = I$, and $D = 0$, and $A \in \mathbb{R}^{2n+1 \times 2n+1}$ has the following form:
\BEAS
A_{ij}
:=
\begin{cases}
a^{|n+1-i|} & \text{if $i = j$}\\
p & \text{if $i = j+1$ or $i = j-1$}\\
0 & \text{otherwise}
\end{cases},
\EEAS
where we pick $a = 0.8$, and $p = 0.1$. 
See the Fig. \ref{fig:chain}.

Thanks to the system's symmetry, we can easily obtain the $k$-sparse $\mathcal{H}_{\infty}$ maximizing disturbance analytically.
The solution is to select the center node disturbance channel when $k = 1$, and as $k$ increases, including the right (or left) closest node from the center.
See Fig. \ref{fig:graph_chain} for the result. 
Here the semi-definite relaxation provides an upper bound and our rounding scheme provides a lower bound.
Due to its symmetry, the SDP relaxation has a hard time to find the actual solution, but interestingly enough, our rounding scheme returns the true optimal value.
We also compute the $\hinf$ norm of the system, and the ratio between $1$-sparse norm and $\hinf$ norm is around $0.85$.

\begin{figure}
\begin{center}
\includegraphics[width=0.45\textwidth]{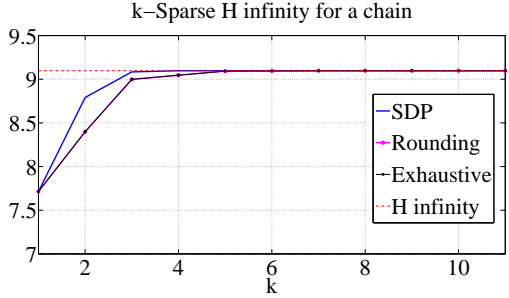}
\caption{SDP, rounding, and true value of $k$-sparse $\hinf$ analysis for a linear chain}
\label{fig:graph_chain}
\end{center}
\end{figure}

\subsection{Random dynamical system}
For this example, we construct an Erd\H{o}s-Renyi graph with $(n,p)$.
The weight of each edge is drawn from the standard Normal distribution to construct $A$.
Here, $B = 0.1 I_n$, $C = I_n$, and $D = 0$.

After obtaining values of $20$ samples, we plot the mean of the upper bound and the lower bound.
See Fig. \ref{fig:random_lti}.
We also perform exhaustive searches up to $k = 5$ to find the true optimal value.
In this case, we can see that the exact solution has matched with our rounding scheme.

\begin{figure}
\begin{center}
\includegraphics[width=0.45\textwidth]{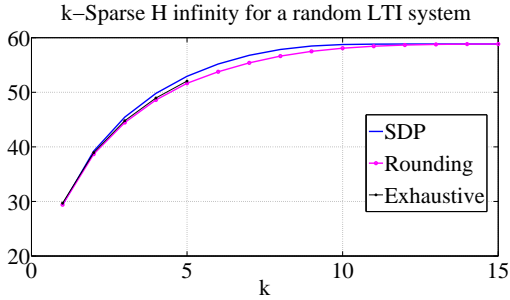}
\caption{Mean values of SDP, rounding, and true value of $k$-sparse $\hinf$ analysis for random LTI systems}
\label{fig:random_lti}
\end{center}
\end{figure}

\subsection{Synchronization network}
To construct the example of a synchronization network, we choose the Petersen graph for the graph topology.
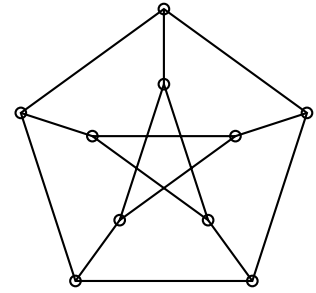
\begin{figure}
\begin{center}
\begin{tikzpicture}[style=thick]
\draw (18:2cm) -- (90:2cm) -- (162:2cm) -- (234:2cm) --
(306:2cm) -- cycle;
\draw (18:1cm) -- (162:1cm) -- (306:1cm) -- (90:1cm) --
(234:1cm) -- cycle;
\foreach \x in {18,90,162,234,306}{
\draw (\x:1cm) -- (\x:2cm);
\draw (\x:2cm) circle (2pt);
\draw (\x:1cm) circle (2pt);
}
\end{tikzpicture}
\caption{The Petersen graph}
\label{fig:petersen}
\end{center}
\end{figure}

Based on this topology, we generate two synchronization networks. The first one is based on the maximum degree rule, and the second one is based on the fastest protocol synthesis method via semidefinite programming \cite{Xiao:2004gg}.
See Fig. \ref{fig:petersen_max} and \ref{fig:petersen_fdla} for the result.

\begin{figure}
\begin{center}
\includegraphics[width=0.45\textwidth]{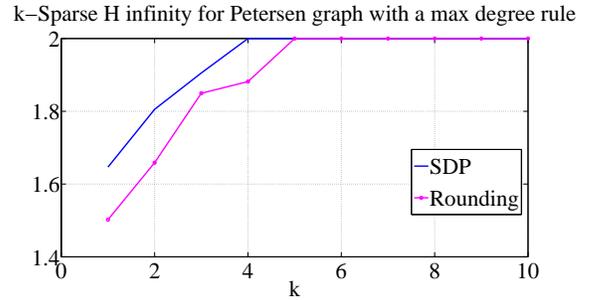}
\caption{SDP, and rounding value of $k$-sparse $\hinf$ analysis for the Petersen synchronization network with the maximum degree rule}
\label{fig:petersen_max}
\end{center}
\end{figure}

\begin{figure}
\begin{center}
\includegraphics[width=0.45\textwidth]{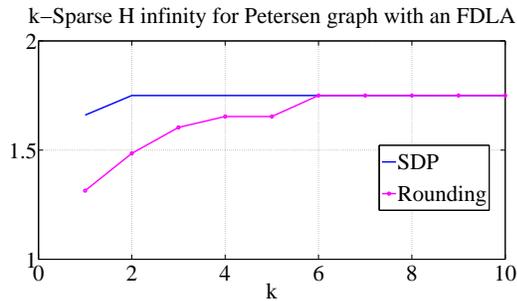}
\caption{SDP, and rounding value of $k$-sparse $\hinf$ analysis for the Petersen synchronization network with the maximum spectral gap}
\label{fig:petersen_fdla}
\end{center}
\end{figure}

\subsection{$k$-sparse $\hinf$ synthesis}
To illustrate the effectiveness of our synthesis approach, we apply our method to the following system:
\begin{align*}
A &= 
\begin{bmatrix}
0.5 & 0.2 & 0\\
0.2 & 0.5 & 0.2\\
0& 0.2 & 0.5
\end{bmatrix}
, ~~B_1=\begin{bmatrix} I_3 & 0_{3 \times 3} \end{bmatrix}
, ~~B_2= I_3\\
C_1 &= \begin{bmatrix} I_3\\0_{3\times 3} \end{bmatrix}
, ~~D_{11} = 0_{6 \times 6}
, ~~D_{12} = \begin{bmatrix} 0_{3\times 3}\\ I_3\end{bmatrix}\\
C_2 &= I_3 
, ~~D_{21} = \begin{bmatrix} 0_{3 \times 3} & I_3\end{bmatrix}
, ~~D_{22}= 0_{3\times 3}.
\end{align*}
Based on our approach, we obtain the controller that minimizes the SDP relaxation of the $k$-sparse $\hinf$ norm using convex optimization.
Then we compute the actual $k$-sparse $\hinf$ norm via exhaustive search to compare the results, see the Table \ref{table:controller} for the result.

%\begin{table}[t!]
%\centering
%\begin{tabular}{c|cccccc}
%$k$-sparse& \multicolumn{5}{c} {$k$-sparse $\hinf$ norm}\\
%controller & 1 & 2 & 3 & 4 & 5 & $\hinf$\\
%\hline
%1 &     1.1826  &  \color{gray}1.3939  &  \color{gray}1.5078   &\color{gray} 1.6171   &\color{gray} 1.6665 &  \color{gray} 1.7152\\
%2 &     \color{gray} 1.2289  &  1.3340  &  \color{gray}1.4116  & \color{gray} 1.4655  &\color{gray}  1.4984   & \color{gray}1.5258\\
%3 &     \color{gray}1.2509  &  \color{gray}1.3539 &   1.4053  &  1.4574  & \color{gray} 1.4892  &  \color{gray}1.5159\\
%4 &     \color{gray}1.2628 &  \color{gray} 1.3665  &  \color{gray}1.4194 &  \color{gray} 1.4772 &   \color{gray}1.5078 &  \color{gray} 1.5337\\
%5 &     \color{gray}1.2748  &\color{gray}  1.3821  & \color{gray} 1.4330   &\color{gray} 1.4888 &  \color{gray} 1.5194   &\color{gray} 1.5454\\
%$\hinf$  &   \color{gray}  1.3832   & \color{gray}1.4172 &   \color{gray}1.4389  &\color{gray} \color{gray} 1.4791 & 1.4882  &  1.5050
%\end{tabular}
%\caption{$k$-sparse $\hinf$ norms of the controller obtained from our method. The minimum values in the rows are emphasized. }
%\label{table:controller}
%\end{table}
\begin{table}[t!]
\centering
\begin{tabular}{c|cccc}
$k$-sparse& \multicolumn{4}{c} {$k$-sparse $\hinf$ norm}\\
controller & 1 & 2 & 3  & $\hinf$\\
\hline
1 &     1.1826  &  \color{gray}1.3939  &  \color{gray}1.5078   &  \color{gray} 1.7152\\
2 &     \color{gray} 1.2289  &  1.3340  &  \color{gray}1.4116     & \color{gray}1.5258\\
3 &     \color{gray}1.2509  &  \color{gray}1.3539 &   1.4053   &  \color{gray}1.5159\\
$\hinf$  &   \color{gray}  1.3832   & \color{gray}1.4172 &   \color{gray}1.4389   &  1.5050
\end{tabular}
\caption{$k$-sparse $\hinf$ norms of the controller obtained from our method. The minimum values in the rows are emphasized. }
\label{table:controller}
\end{table}

Since our synthesis method is based on the SDP relaxation of the $k$-sparse $\hinf$ norm, the resulting controller may not be the true optimal controller.
However, as we can see, the controllers computed with respect to relaxations of the $k$-sparse $\Hinf$ norm exhibit better performance with respect to $k$ disturbances than the general $\Hinf$ optimal controller.  In particular, if only $k$ disturbances are allowed to coordinate their attack, then we see that if a controller is designed to mitigate the worst case effect of a larger number of disturbances, this can in fact lead to a degradation in the closed loop $k$-spare $\Hinf$ norm of the system.

\section{Conclusion}
Motivated by robustness properties of large-scale systems, we defined the $k$-sparse $\hinf$ norm and the $k$-sparse minimal gain of a system, and argued that in such large-scale settings, traditional $\Hinf$ analysis may be overly conservative.  As computing these objects involves solving a combinatorial optimization problem, we developed semidefinite programming relaxations to these combinatorial optimization problems, akin to those used by \cite{d2007direct} in the context of sPCA, leading to upper (lower) bounds on the $k$-sparse $\hinf$ norm (minimal gain) of a system.  We also developed a simple rounding heuristic to provide corresponding lower (upper) bounds.  Via numerical simulation, we show that the gap between the upper bound and the lower bounds obtained is usually quite small, and that perhaps surprisingly, the candidate disturbance that achieves the lower bound is indeed the true worst-case sparse disturbance.  We also developed a centralized controller synthesis method based on the KYP-like dual of our semidefinite relaxation, and confirmed its effectiveness through a numerical example.  In future work, we aim to extend our synthesis methods to nested distributed systems \cite{scherer2013structured} (which also admit solutions based on KYP-like semidefinite programs), and to develop specialized solvers that exploit structure in the state-space parameters such that our methods can scale to systems with hundreds to thousands of states.

%\subfile{../chapters/conclusion}

\bibliographystyle{IEEEtran}
\bibliography{IEEEabrv,sparse_h,consensus}

\end{document}